
\documentclass[10 pt, conference]{ieeeconf}

% \IEEEoverridecommandlockouts

\IEEEoverridecommandlockouts

\overrideIEEEmargins                                      % Needed to meet printer requirements.

% The preceding line is onlyhttps://www.overleaf.com/project/6065f539f1644b838ac1d337 needed to identify funding in the first footnote. If that is unneeded, please comment it out.

\usepackage{cite}

\expandafter\let\csname proof\endcsname\relax

\expandafter\let\csname endproof\endcsname\relax

\usepackage{amsmath,amssymb,amsfonts}

\usepackage{graphicx}
\usepackage{flushend}
\usepackage{textcomp}

\usepackage{makecell}

\usepackage{xcolor}

\usepackage{amsthm}

\usepackage{hyperref}

\usepackage{algorithmic}

\usepackage{algorithm}

\usepackage{subfigure}

\usepackage{booktabs}

\usepackage{balance}

\def\BibTeX{{\rm B\kern-.05em{\sc i\kern-.025em b}\kern-.08em

    T\kern-.1667em\lower.7ex\hbox{E}\kern-.125emX}}

\usepackage{threeparttable}

\newtheorem{lemma}{Lemma}
\newtheorem{problem}{Problem}
\newtheorem{remark}{Remark}
\newtheorem{definition}{Definition}

\newtheorem{assumption}{Assumption}

\usepackage{url}

\usepackage{color}

\allowdisplaybreaks

\title{\LARGE \bf
Stable and Safe Reinforcement Learning via a Barrier-Lyapunov Actor-Critic Approach
}

\author{Liqun Zhao, Konstantinos Gatsis, Antonis Papachristodoulou% <-this % stops a space
\thanks{*The {authors} are within the Department of Engineering Science, University of Oxford, Oxford, United Kingdom. E-mails: {\tt\small \{liqun.zhao,\ konstantinos.gatsis,antonis\}@eng.ox.ac.uk}}% <-this % stops a space
}
\begin{document}

\maketitle
\thispagestyle{empty}
\pagestyle{empty}

%%%%%%%%%%%%%%%%%%%%%%%%%%%%%%%%%%%%%%%%%%%%%%%%%%%%%%%%%%%%%%%%%%%%%%%%%%%%%%%%
\begin{abstract}

Reinforcement learning (RL) has demonstrated impressive performance in various areas such as video games and robotics. However, ensuring safety and stability, which are two critical properties from a control perspective, remains a significant challenge when using RL to control real-world systems. In this paper, we first provide definitions of safety and stability for the RL system, and then combine the control barrier function (CBF) and control Lyapunov function (CLF) methods with the actor-critic method in RL to propose a Barrier-Lyapunov Actor-Critic (BLAC) framework which helps maintain the aforementioned safety and stability for the system. In this framework, CBF constraints for safety and CLF constraint for stability are constructed based on the data sampled from the replay buffer, and the augmented Lagrangian method is used to update the parameters of the RL-based controller. Furthermore, an additional backup controller is introduced in case the RL-based controller cannot provide valid control signals when safety and stability constraints cannot be satisfied simultaneously. Simulation results\footnote{The code can be found in the GitHub repository: \url{https://github.com/LiqunZhao/A-Barrier-Lyapunov-Actor-Critic-Reinforcement-Learning-Approach-for-Safe-and-Stable-Control}} show that this framework yields a controller that can help the system approach the desired state and cause fewer violations of safety constraints compared to baseline algorithms.

\end{abstract}

%%%%%%%%%%%%%%%%%%%%%%%%%%%%%%%%%%%%%%%%%%%%%%%%%%%%%%%%%%%%%%%%%%%%%%%%%%%%%%%%
\section{INTRODUCTION}
The remarkable success of reinforcement learning (RL) in solving complex sequential decision-making problems, including Atari games, has inspired researchers to explore its potential in real-world applications, such as robotics~\cite{kober2013reinforcement}. However, the trial-and-error nature of RL may lead agents to exhibit actions resulting in dangerous or harmful consequences during learning. For example, although RL agents are typically trained in simulation platforms, it is crucial to take the challenges that arise in real-world scenarios, such as real-world uncertainties \cite{wang2023trustworthy,wang2023quadue}, into consideration when training RL agents directly in real environments where safety is fundamental. Furthermore, regardless of the controller design method employed, stability is one of the paramount properties that need to be guaranteed given a control system. Stability means the system will stay close or converge to the equilibrium, and it bears a close relationship with the performance of the control system. A lack of stability renders the system ineffective and even dangerous \cite{han2020actor}.

Safety in RL has been gaining attention from researchers. Existing safe RL approaches are mainly based on the constrained Markov decision process (CMDP) \cite{altman1999constrained}, and methods such as primal-dual update \cite{tessler2018reward,paternain2019constrained}, and trust region policy optimization \cite{achiam2017constrained,yang2020projection} have been employed. However, in previous studies, safety constraints are usually defined based on the cumulative cost of an entire trajectory, rather than on the individual cost signals at each timestep along the trajectory. Consequently, safety violations at specific timesteps are wrapped by the trajectory expectation, and certain states may be permitted to be unsafe \cite{ma2021learn,ma2021feasible,ma2022joint}. 

Recently, there has been growing interest in combining control-theoretic methods with RL to ensure the safety for the system \cite{hewing2020learning}. In \cite{koller2018learning}, a learning-based model predictive control method is proposed to provide safety guarantees with high probability. Besides, other concepts such as safe set algorithm (SSA) \cite{ma2021learn,wei2019safe} and control barrier function (CBF) \cite{cheng2019end,emam2021safe,wang2022safety,do2023game} have also been applied as safety constraints to help RL training maintain safety. However, according to \cite{emam2021safe}, the supervised learning of the CBF layer in \cite{cheng2019end} has the potential to introduce approximations that may adversely impact RL training. Also, applying a filter to aid the RL-based controller in choosing a safe action at each timestep can lead to a jerky output, or trajectory, which is often undesirable in many applications. In \cite{emam2021safe}, when the time interval between consecutive timesteps is not small enough, there might be a misalignment between the constrained optimization part, where continuous CBFs are used as constraints, and the discrete-time system dynamics yielding the state transitions for the Markov decision process (MDP). 

The Lyapunov function, a popular tool in the control community, has recently been applied to RL as well \cite{cao2023physical}. Some previous studies develop ways to construct Lyapunov functions for RL training \cite{chow2018lyapunov}, while \cite{berkenkamp2017safe} proposes a method using Lyapunov functions to guide safe exploration, which is however difficult to be applied to high-dimensional systems due to the curse of dimensionality caused by the discretization of the state space. \cite{han2020actor} and \cite{chang2021stabilizing} apply Lyapunov functions to model-free RL to help to guarantee the stability for various systems. However, considering the sample efficiency, and the fact that usually at least a nominal model is available in real applications like robotic arms, model-based RL methods can be used since they are more sample efficient than model-free RL methods.

In this paper, our focus is on helping to guarantee safety and stability for systems that can be expressed by MDP. Our main work and contributions can be summarized as follows:
\begin{itemize}
        \item An RL-based controller is proposed by combining CBF and control Lyapunov function (CLF) with the Soft Actor-Critic (SAC) algorithm \cite{haarnoja2018soft2}, and the augmented Lagrangian method is used to solve the corresponding constrained optimization problem. The RL-based controller assists to guarantee both safety and stability simultaneously with separate CBF constraints and CLF constraint, respectively, and the augmented Lagrangian method can update the controller parameters efficiently while making the hyperparameter tuning for different learning rates easier, which might be difficult for the primal-dual update widely used in previous studies.
        \item A backup controller is proposed to replace the RL-based controller when no feasible solution exists to satisfy both safety and stability constraints simultaneously. The inequality constraints of the backup controller are constructed based on CBFs, which help the system achieve and maintain safety. The CLF constraint is incorporated into the objective function to prevent the system from diverging too far from equilibrium while satisfying CBF constraints. This safety-critical design is well-suited for real-world applications where safety is of paramount importance, and accelerates the system's approach towards its desired state in practice.
        \item A framework called Barrier-Lyapunov Actor-Critic
(BLAC) is proposed by combining the RL-based and backup controllers, and we test it on two simulation tasks. Our results show that the framework helps to guarantee the safety and stability of the system, and therefore achieves better results compared to baselines.
\end{itemize}

While unifying safety and stability is common in many other studies \cite{ames2019control,tan2021undesired, dawson2022learning}, to the best of the authors' knowledge, this is the first time to combine actor-critic RL with CBF and CLF as separate constraints to train an RL-based controller. \cite{choi2020reinforcement} uses RL to learn uncertainties in the CLF, CBF, and other constraints to generate control signals using quadratic programming, and thus is different from the framework proposed in this paper.
 
\section{PROBLEM STATEMENT}
\label{sec:problem_formulation}

The Markov decision process (MDP) with control-affine dynamics can be defined by the tuple $\mathcal{M}$, which is $(\mathcal{X} , \mathcal{U} ,f,g,d , r, c,\gamma, \gamma_{c} )$. $\mathcal{X} \subset \mathbb{R} ^{n} $ and $\mathcal{U} \subset \mathbb{R} ^{m}$ are state and control signal spaces, and the state transitions for the MDP are obtained by the following control-affine system:
\begin{align}\label{affine model}
        x_{t+1}=f(x_t)+g(x_t)u_t+d(x_t).
\end{align}
Here, $x_t \in \mathcal{X} $ is the state at timestep $t$, $u_t \in \mathcal{U} $ is the control signal at timestep $t$, and with an RL-based controller $\pi$, it is sampled from a distribution $\pi(u_t|x_t)$. $f: \mathbb{R} ^{n} \rightarrow \mathbb{R} ^{n}$ and $g: \mathbb{R} ^{n} \rightarrow \mathbb{R} ^{n\times m}$ define the known nominal model of the system. $d: \mathbb{R} ^{n} \rightarrow \mathbb{R} ^{n}$ denotes the unknown model which is continuous with respect to the state. $r$, $c$ are the reward and cost, respectively, and $\gamma$ and $\gamma_{c}$ are the discount factors.
\begin{remark}
    In practical applications, the nominal model that is known beforehand may perform poorly, which necessitates the need to learn a better model. Similar to \cite{cheng2019end,emam2021safe}, we use a Gaussian process (GP), a kernel-based nonparametric regression model, to estimate the unknown dynamics $d$ from data. Consequently, when constructing CBF and CLF constraints in Section \ref{sec:framework design}, we replace $d$ using the mean and variance given by the GP. Due to poor scalability with a large number of data points, we cease GP updates after a certain number of episodes.\end{remark}

Here we present some additional notations that will be used later. Based on (\ref{affine model}), the transition probability can be denoted as $P(x_{t+1}|x_t,u_t)\triangleq I_{\{x_{t+1}=f(x_t)+g(x_t)u_t+d(x_t)\}}$ 
where $I_{\{x_{t+1}=f(x_t)+g(x_t)u_t+d(x_t)\}}$ is an indicator function that equals 1 if $x_{t+1}$ satisfies (\ref{affine model}) given $x_t$ and $u_t$, and 0 otherwise. Similar to \cite{han2020actor}, the closed-loop transition probability is denoted as $P_{\pi}(x_{t+1}|x_t)\triangleq \int_{\mathcal{U} }\pi(u_t|x_t)P(x_{t+1}|x_t,u_t)du_t$. Moreover, the closed-loop state distribution at timestep $t$ is denoted by $\upsilon(x_{t}|\rho,\pi,t)$, which can be calculated iteratively using the closed-loop transition probability: $\upsilon(x_{t+1}|\rho,\pi,t+1)=\int_{\mathcal{X}}P_{\pi}(x_{t+1}|x_t)\upsilon(x_{t}|\rho,\pi,t)dx_{t}, \,\,\,\forall t \in \mathbb{N}$, and $\upsilon(x_{0}|\rho,\pi,0)=\rho$ is the initial state distribution.

\subsection{Definition of Safety}
In this subsection, we present the condition that the controller should satisfy to maintain system safety. Assuming there are $k$ different safety constraints that need to be satisfied, the system is considered safe if
\begin{equation}%\label{CBF function}
    h_i(x_t)  \geq 0 \,\,\,\,\,\,\,\, \forall t \geq 0
    \label{eq:C_definition}
\end{equation}
holds for each $i = 1,\ldots, k$. Here $h_i: \mathbb{R}^n \rightarrow \mathbb{R} $ is a function defined for the $i$-th safety constraint, and a safe set $\mathcal{C}_i \subset \mathbb{R} ^{n}$ can be defined by the super-level set of $h_i$ as follows:
\begin{equation}\label{ith safe set}
        \mathcal{C}_i = \{x \in \mathbb{R}^n|h_i(x) \geq 0\}.
\end{equation}
A safe set $\mathcal{C} \subset \mathbb{R} ^{n}$ can therefore be defined as the intersection of all $\mathcal{C}_i$:
\begin{equation}\label{definition of safe set}
        \mathcal{C}=\bigcap\limits_{i=1}^{k}\mathcal{C}_i=\bigcap\limits_{i=1}^{k}\{x\in\mathbb{R}^n|h_i(x)\ge 0\}.
\end{equation}
We require the system state to remain within this set $\mathcal{C}$, i.e., the safe set $\mathcal{C}$ should be forward invariant. Therefore, the system is required to be safe at every timestep, and thus the constraint here is stricter than that constructed by the expected return of costs, which is widely used in previous studies. A control barrier function can be used to ensure forward invariance of the safe set, and therefore can be naturally applied here to help maintain system safety.
\begin{definition}[Discrete-time Control Barrier Function \cite{cheng2019end}] 
        Given a set $\mathcal{C}_i \subset \mathbb{R} ^{n}$ defined by (\ref{ith safe set}), the function $h_i$ is called a discrete-time control barrier function (CBF) for system (\ref{affine model}) if there exists $\eta \in [0,1]$ such that
        \begin{equation}\label{CBF constraints}
                \mathop {\sup }\limits_{u_t \in \mathcal{U}} \left\{ h_i\big(f(x_t)+g(x_t)u_t+d(x_t)\big) - h_i(x_t)\right\} \ge - \eta h_i(x_t)
        \end{equation}      
        holds for all $x_t\in\mathcal{C}_i$.
\end{definition}
The existence of a  CBF means the existence of a controller such that the set $\mathcal{C}_i$ is forward invariant. Consequently, safety is maintained if there exists a controller such that $\forall i \in [1,k]$, $h_i\big(f(x_t)+g(x_t)u_t+d(x_t)\big) - h_i(x_t) \ge - \eta h_i(x_t)$ holds for all $x_t\in\mathcal{C}$.

\subsection{Definition of Stability} 
In this subsection, we first introduce the cost function, and then give the definition of stability used in this paper.

In a stabilization task,  our goal is to find a controller that can drive the system state to the equilibrium, i.e., the desired state, eventually. 
To achieve this goal, given the state $x_t$ and control signal $u_t$, we define the instantaneous cost signal, which is one element in the tuple  $\mathcal{M}$,  to be $c(x_t,u_t)=\left\lVert x_{t+1}-x_{\text{desired}}\right\rVert$ where $x_{t+1}$ is the next state following~(\ref{affine model}), and $x_{\text{desired}}$ denotes the desired state (i.e., equilibrium). Alternatively, in a tracking task, the control signal can be denoted as $c(x_t,u_t)=\left\lVert x_{t+1}-r_s\right\rVert$  where $r_s$ is the reference signal the system needs to track.

Since we investigate the stability of a closed-loop system under a nondeterministic RL-based controller $\pi$, we combine $\pi(u_t|x_t)$, which is a Gaussian distribution in this paper, with the cost signal $c(x_t,u_t)$ to define the cost function under the controller $\pi$ as 
\begin{align}\label{c_pi definition}
        c_{\pi}(x_t) =  \mathbb{E}_{u_t \sim \pi}c(x_t,u_t)=
        \mathbb{E}_{u_t \sim \pi}[\left\lVert x_{t+1}-x_{\text{desired}}\right\rVert ].
\end{align}
The cost function $c_{\pi}(x_t)$ represents the expected value of the norm of the difference between the next state $x_{t+1}$ following (\ref{affine model}), and the desired state $x_{\text{desired}}$, over $u_t$ sampled from the distribution $\pi(u_t|x_t)$. It is natural to expect that the value of $c_{\pi}(x_t)$ should decrease as $t$ increases to drive the system state towards the equilibrium, and we hope eventually $c_{\pi}(x_t) = 0$, which means the state reaches the equilibrium. However, as the control signal is sampled from a Gaussian distribution, the state at timestep $t$ is also distributed, which necessitates the use of the concept of ``expected value" for $c_{\pi}(x_t)$. Therefore, we adopt the definition of stability as presented in \cite{han2020actor} for the framework proposed in this paper.

\begin{definition}[Stability in Mean Cost]\label{Stability in Mean Cost at the Equilibrium}
Let $\upsilon(x_{t}|\rho,\pi,t)$ denote the closed-loop
state distribution at timestep $t$. The equilibrium of a system  is said to be stable in mean cost if there exists a positive constant $b$ such that for any initial state $x_0 \in \{x_0|c_{\pi}(x_0) < b\}$, the condition
\begin{align}\label{stability definition 1}
 \lim_{t\to\infty}\mathbb{E}_{x_t \sim \upsilon}\big[c_{\pi}(x_t)\big]=0
\end{align}
holds. If $b$ is arbitrarily large,
the equilibrium is globally stable in mean cost.
\end{definition}

\begin{remark}
It is important to note that the term ``stability in mean cost" differs from the commonly used concept of Lyapunov stability. The condition~(\ref{stability definition 1}) constructed by using the expected value is the convergence condition required for asymptotic stability. Therefore, the condition of Lyapunov stability is not involved in Definition~\ref{Stability in Mean Cost at the Equilibrium}. For linear systems, convergence implies stability in the sense of Lyapunov. However, in general cases, this does not hold - see Vinograd's counterexample. Considering this, in Section~\ref{sec:framework design}, we apply an exponentially stabilizing CLF to achieve exponential stability, which implies Lyapunov stability for the equilibrium.
\end{remark}

\subsection{Definition of the Safe and Stable Control Problem}
Based on the previous subsections, similar to \cite{dawson2022safe1}, we give the formal definition of the safe and stable control problem:

\begin{problem}[Safe and Stable Control Problem]\label{Safe and Stable Control Problem}
Given a control-affine system $x_{t+1}=f(x_t)+g(x_t)u_t+d(x_t)$, a unique desired state (equilibrium) $x_{\text{desired}}$, a set $\mathcal{X}_b=\{x_0|c_{\pi}(x_0) < b\}$ where $b$ is an arbitrarily large positive number and $x_0$ denotes the initial state, a set of unsafe states $\mathcal{X}_{unsafe}\subseteq \mathcal{X}$, and a set of safe states $\mathcal{X}_{safe}\subseteq \mathcal{X}$ such that $x_{\text{desired}} \in \mathcal{X}_{safe}$ and $\mathcal{X}_{safe}\cap\mathcal{X}_b \neq \emptyset$, find a controller $\pi$ generating control signal $u_t$ such that all trajectories satisfying $x_{t+1}=f(x_t)+g(x_t)u_t+d(x_t)$ and $x_0 \in \mathcal{X}_{safe}\cap\mathcal{X}_b$ have the following properties:
\begin{itemize}
\item \textbf{Safety}: $x_t \in \mathcal{X}_{safe}\,\,\,\,\forall t \geq 0 $.
\item \textbf{The Equilibrium Is Stable in Mean Cost:} $\lim_{t\to\infty}\mathbb{E}_{x_t \sim \upsilon}\big[c_{\pi}(x_t)\big]=0$ where $\upsilon(x_{t}|\rho,\pi,t)$ is the closed-loop
state distribution at timestep $t$.
\end{itemize}
\end{problem}
Therefore $\mathcal{X}_{safe}$ is a safe set that is forward invariant, and the controller is used to help the system arrive at the desired state $x_{\text{desired}}$ while avoiding the unsafe states.

\begin{remark}
     Since $b$ is arbitrarily large in Problem \ref{Safe and Stable Control Problem}, $x_{\text{desired}}$ is globally stable in mean cost and is the only equilibrium point. Thus, given this equilibrium is stable in mean cost, we can say that the system is stable in mean cost. For brevity, we will use the term ``stability" to refer to ``stability in mean cost" in the rest of this paper, and when we say the system is stable, it means the system is stable in mean cost.
\end{remark}
\begin{remark}
    Given $b$ is arbitrarily large, the problem formulation requires that from every initial state $x_0 \in \mathcal{X}_{safe}$ the system should arrive at the unique equilibrium $x_{\text{desired}}$ eventually. Therefore, this paper excludes tasks where it is impossible for the system to reach the unique equilibrium $x_{\text{desired}}$ starting from some safe states, or modifications should be made to preclude those safe but undesirable states.
\end{remark}

\section{Framework Design}
\label{sec:framework design}
In this section, we first define the value function of the cost and give a condition for stability. Then, we employ the augmented Lagrangian method to update the RL-based controller parameters, aiming to satisfy safety and stability conditions and thus help to guarantee both properties for the system. Finally, considering the possible infeasibility of the constrained optimization problem, a backup controller is proposed to replace the RL-based controller when no feasible solution exists to satisfy both safety and stability constraints simultaneously. The overall Barrier-Lyapunov Actor-Critic (BLAC) framework is included at the end of this section.

\subsection{Value Function of the Cost} 
To assist in maintaining stability for the system, inspired by the commonly-used value functions in RL literature, we define the value function of the cost at the state $x_t$ similarly:
\begin{equation}\label{definiton of Lyapunov function}
        L_{\pi}(x_t)=\mathbb{E}_{\tau \thicksim \pi}\big[\sum_{i=0}^{\infty}\gamma_c^{i}c_{\pi}(x_{t+i})\big].  
    \end{equation}
Here $\tau \!=\!\{x_{t},x_{t+1},x_{t+2},\cdots\}$ is a trajectory under controller $\pi$ starting from the initial state $x_t$. Based on this definition, $L_{\pi}(x_t)$ can also be approximated by a neural
network, and since $c_{\pi}(x_{t})$ is non-negative for any $t \geq 0$, we may choose to apply the rectified linear unit (ReLU) as the output activation function to make $L_{\pi}(x_t) \geq 0$ for each $x_t$. Other methods to ensure non-negativity can be found in \cite{han2020actor}, \cite{dawson2022safe2}. 

Based on (\ref{definiton of Lyapunov function}), to achieve stability, a natural approach is to consider imposing a condition in the algorithm that ensures that the value of $L_{\pi}(x_t)$ decreases along the trajectory $\tau$. Inspired by the concept of exponentially stabilizing CLF and \cite{han2020actor}, we first make two assumptions for the MDP:
\begin{assumption}\label{Finite Reward and Cost}
       The state and control signal are sampled from compact sets, and the reward and cost values obtained at any timestep are bounded by $r_{max}$ and $c_{max}$, respectively. Therefore, the value function of the reward and cost are upper bounded by $\frac{r_{max}}{1-\gamma}$ and $\frac{c_{max}}{1-\gamma_c}$, respectively.
\end{assumption}
\begin{assumption}[Ergodicity]\label{ergodicity}
        The Markov chain induced by controller $\pi$ is ergodic with a unique stationary distribution $q_{\pi}(x) = \lim_{t\to\infty}\upsilon(x_t=x|\rho,\pi,t)$, where $\upsilon(x_t|\rho,\pi,t)$ is the closed-loop state distribution.
\end{assumption}
These assumptions are widely used in previous RL research. For more content on the relationship between models of control systems and Markov chains, as well as ergodicity, we refer the interested reader to the book \cite{meyn2022control}. Then, we introduce the Lemma \ref{Lemma 1} invoked by \cite{han2020actor} as follows:
\begin{lemma} \label{Lemma 1}
Under Assumptions \ref{Finite Reward and Cost}, \ref{ergodicity}, the system defined in (\ref{affine model}) is stable in mean cost if there exist positive constants $\alpha_1$, $\alpha_2$, $\beta$, and a controller $\pi$, such that
\begin{align}\label{stability requirement 1}
            \alpha_1 c_{\pi}(x) \leq L_{\pi}(x) \leq \alpha_2 c_{\pi}(x)
        \end{align}
        \begin{align}\label{stability requirement 2}
                \mathbb{E}_{x\sim \mu_{\pi},x^\prime\sim P_{\pi}}\big[L_{\pi}(x^\prime) \! - \! L_{\pi}(x)\big]
                \!  \leq \! -\beta\mathbb{E}_{x\sim \mu_{\pi}}\big[L_{\pi}(x)\big]
        \end{align}
        hold for all $x \in \mathcal{X} $. Here $L_{\pi}$ defined in (\ref{definiton of Lyapunov function}) is the value function of the cost under the controller $\pi$, and 
        \begin{align}\label{definiton of sampling distribution}
                \mu_{\pi}(x)  \triangleq \lim_{N\to\infty}\frac{1}{N}\sum_{t=0}^{N}\upsilon(x_t=x|\rho,\pi,t)
        \end{align}
        is the sampling distribution, where $\upsilon(x_{t}|\rho,\pi,t)$ is the closed-loop
state distribution at timestep $t$.
\end{lemma}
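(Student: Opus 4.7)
The plan is to show that the decay condition (16), coupled with ergodicity, forces $\mathbb{E}_{x\sim\mu_\pi}[L_\pi(x)] = 0$; the sandwich bound (15) then propagates this to $c_\pi$; and ergodic convergence of the marginal $\upsilon(\cdot|\rho,\pi,t)$ to the stationary distribution delivers the required limit in Definition 1.

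First I would invoke Assumption 2 to identify the Cesàro-averaged sampling distribution $\mu_\pi$ defined in (17) with the unique stationary distribution $q_\pi$ of the closed-loop chain, which is a standard consequence of ergodicity. Stationarity then gives the invariance identity
\begin{equation*}
\mathbb{E}_{x\sim\mu_\pi,\,x'\sim P_\pi}\bigl[L_\pi(x')\bigr] = \mathbb{E}_{x\sim\mu_\pi}\bigl[L_\pi(x)\bigr],
\end{equation*}
which holds because if $x\sim q_\pi$ and $x'\mid x\sim P_\pi(\cdot\mid x)$, the marginal of $x'$ is again $q_\pi$. Plugging this into (16) collapses its left-hand side to zero, giving $0 \le -\beta\,\mathbb{E}_{\mu_\pi}[L_\pi(x)]$. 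Since $c_\pi\ge 0$ by construction and $\gamma_c\in(0,1)$, the definition (14) forces $L_\pi\ge 0$ pointwise; together with $\beta>0$ this pins down $\mathbb{E}_{\mu_\pi}[L_\pi(x)]=0$, so $L_\pi=0$ $\mu_\pi$-almost everywhere.

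Next I would use the left inequality $\alpha_1 c_\pi(x)\le L_\pi(x)$ of (15), together with $\alpha_1>0$ and $c_\pi\ge 0$, to deduce $c_\pi=0$ $\mu_\pi$-a.e., i.e.\ $\mathbb{E}_{x\sim q_\pi}[c_\pi(x)]=0$. To convert this into the limit required by stability in mean cost, I would combine the ergodic convergence $\upsilon(\cdot\mid\rho,\pi,t)\to q_\pi$ with the uniform bound $c_\pi\le c_{max}$ supplied by Assumption 1. A bounded-convergence / total-variation argument then yields
\begin{equation*}
\lim_{t\to\infty}\mathbb{E}_{x_t\sim\upsilon}\bigl[c_\pi(x_t)\bigr] = \mathbb{E}_{x\sim q_\pi}\bigl[c_\pi(x)\bigr] = 0,
\end{equation*}
which is exactly the conclusion of Definition 1. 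The upper half $L_\pi\le\alpha_2 c_\pi$ of (15) plays the auxiliary role of ensuring that the initial-state sublevel set $\{x_0\mid c_\pi(x_0)<b\}$ is compatible with the value function so that the hypothesis is non-vacuous; because $b$ may be taken arbitrarily large and ergodicity absorbs the dependence on $x_0$, the conclusion is uniform over admissible starts.

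The main obstacle is the final interchange of limit and expectation: $c_\pi$ is not assumed continuous, so pure weak convergence of $\upsilon(\cdot\mid\rho,\pi,t)$ does not automatically suffice. I would discharge this by appealing to convergence in total variation (which ergodicity typically supplies for well-behaved chains under mild regularity such as a Doeblin or Harris condition), or alternatively by approximating $c_\pi$ using bounded continuous test functions on the compact state space guaranteed by Assumption 1. Once this analytic subtlety is handled, the heart of the proof is simply the stationarity identity for $\mu_\pi$ paired with non-negativity of $L_\pi$, with the two halves of (15) squeezing $c_\pi$ appropriately.
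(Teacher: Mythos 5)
Your argument is correct, and it is worth noting up front that the paper itself does not actually write out a proof of this lemma --- it defers entirely to Theorem~1 of \cite{han2020actor} --- so the relevant comparison is with that source proof. The two arguments differ in where the drift term $\mathbb{E}_{x\sim \mu_{\pi},x^\prime\sim P_{\pi}}\big[L_{\pi}(x^\prime)-L_{\pi}(x)\big]$ gets annihilated. You invoke ergodicity at the outset to identify the Ces\`aro average $\mu_{\pi}$ with the unique stationary distribution $q_{\pi}$ and then use invariance of $q_{\pi}$ under $P_{\pi}$ to make the drift exactly zero, so that (\ref{stability requirement 2}) immediately forces $\mathbb{E}_{q_{\pi}}[L_{\pi}]=0$ and hence, via the lower bound in (\ref{stability requirement 1}) and nonnegativity of $c_{\pi}$, $c_{\pi}=0$ holding $q_{\pi}$-almost everywhere. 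The source proof instead unfolds the definition (\ref{definiton of sampling distribution}) and telescopes: $\frac{1}{N}\sum_{t=0}^{N}\big(\mathbb{E}_{x\sim \upsilon(\cdot|\rho,\pi,t+1)}[L_{\pi}]-\mathbb{E}_{x\sim \upsilon(\cdot|\rho,\pi,t)}[L_{\pi}]\big)=\frac{1}{N}\big(\mathbb{E}_{\upsilon_{N+1}}[L_{\pi}]-\mathbb{E}_{\upsilon_{0}}[L_{\pi}]\big)\to 0$, using only the boundedness of $L_{\pi}$ from Assumption~\ref{Finite Reward and Cost}, and reserves ergodicity for the final step, where Ces\`aro convergence of the nonnegative sequence $\mathbb{E}_{\upsilon_t}[c_{\pi}]$ to zero is upgraded to genuine convergence because ergodicity makes that sequence itself convergent. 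What the telescoping buys is that the key cancellation requires neither stationarity of $\mu_{\pi}$ nor invariance of the limit, only boundedness of $L_{\pi}$; what your version buys is directness and a transparent explanation of why (\ref{stability requirement 2}) is really a constraint on the stationary regime. Both routes share the same final analytic step --- passing the limit through the expectation of the bounded function $c_{\pi}$ --- and you correctly isolate this as the only delicate point and offer adequate remedies (Scheff\'e/total-variation convergence, or bounded convergence on the compact state space of Assumption~\ref{Finite Reward and Cost}). One small quibble: the upper bound $L_{\pi}\le\alpha_2 c_{\pi}$ in (\ref{stability requirement 1}) is not used anywhere in the convergence argument, in either version; your attempt to assign it an auxiliary role tied to the sublevel set $\{x_0\mid c_{\pi}(x_0)<b\}$ is speculative and could simply be omitted.
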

\begin{proof}
The proof of this lemma closely resembles that of Theorem 1 in \cite{han2020actor}, and interested readers are encouraged to refer to that paper for further details.   
\end{proof}
According to the proof presented in \cite{wang2023rl}, $L_{\pi}$ naturally satisfies the constraints (\ref{stability requirement 1}) under Assumption \ref{Finite Reward and Cost}. In the next subsection, we present a method to obtain a controller that results in an $L_{\pi}$ satisfying (\ref{stability requirement 2}), and thus we can call $L_{\pi}$ an exponentially stabilizing CLF, abbreviated as CLF hereafter.

\subsection{Augmented Lagrangian Method for Parameter Updating}
In the previous sections, we proposed conditions (\ref{CBF constraints}) and (\ref{stability requirement 2}) that a controller should satisfy to help guarantee safety and stability. We now turn our attention to updating an RL-based controller $\pi$ to meet these conditions, namely finding a qualified controller. This can be thought of as a constrained optimization problem with the above conditions as constraints. It is common to employ the primal-dual method to update the RL-based controller parameters in previous studies, however, this method often requires hyperparameter tuning, which might be challenging, for adjusting the learning rates used to update the parameters of different neural networks and Lagrangian multipliers. Here we adopt the augmented Lagrangian method, inspired by its effectiveness in solving constrained optimization problems and \cite{9683088}, to update the parameters of the RL-based controller.

Practically, we need to construct the conditions based on data collected during the learning process. At each timestep, the system applies a control signal and receives feedback, including reward and cost. The resulting transition pair $(x_t,u_t,r_t,c_t,x_{t+1})$ is stored in the replay buffer, and we sample a batch of these transition pairs, denoted as $\mathcal{D}$, randomly from this replay buffer at each timestep to construct the CBF and CLF constraints with the system model for safety and stability, respectively: 
\begin{equation}\label{inequality constraints for CBF and CLF}
        \begin{array}{l}
            h_i(\hat{x}_{t+1}) - h_i(x_t) \ge -\eta h_i(x_t) \vspace{1ex}
      \qquad \qquad \qquad \,\,\, \forall i \in [1,k],\\\vspace{1ex}
         L_{\pi}(\hat{x}_{t+1})-L_{\pi}(x_t) \leq -\beta L_{\pi}(x_t),
        \end{array}
        \end{equation}
for each $x_t \in \mathcal{D}$. $\hat{x}_{t+1}=f(x_t)+g(x_t)u+d(x_t)$ is the predicted next state, and $d(x_t)$ is estimated and replaced by the mean value given by GP. Thus, the system model is used in RL training, and these constraints are functions of the controller $\pi$. Note that $u$ here is the control signal generated by the current controller instead of being the historical control signal $u_t$ generated by a previous controller and then stored in a transition pair in $\mathcal{D}$. To apply the augmented Lagrangian method, these inequality constraints are converted to equality constraints using ReLU\cite{9683088} for each $x_t \in \mathcal{D}$: 
        \begin{equation}\label{equality constraints for CBF and CLF with relu}
                \begin{array}{l}
                        ReLU\big(h_i(x_{t}) - h_i(\hat{x}_{t+1}) -\eta h_i(x_t)\big) = 0 \vspace{1ex} 
        \qquad   \forall i \in [1,k],\\\vspace{1ex}
                    ReLU\big(L_{\pi}(\hat{x}_{t+1}) - L_{\pi}(x_t) +\beta L_{\pi}(x_t)\big)= 0.
                \end{array}
                \end{equation}
        Then the actor-critic approach is used to learn the RL-based controller. We represent the parameters of the RL-based controller and two action-value networks used in SAC by $\theta$ and $\phi_i, i=1,2$, respectively.  Moreover, we use $L_{\nu}$, which is called Lyapunov network, to approximate $L_{\pi}$ defined in (\ref{definiton of Lyapunov function}) with parameters $\nu$. Using these notations, we formulate a new constrained optimization problem as follows:
\begin{equation}\label{RL-based constrained optimization}
        \begin{split}
            \min_{\theta} \,\, & -V^{\pi_{\theta}}\\
            s.t. 
            \,\,& \mathbb{E}_{x_t\sim \mathcal{D}, u \sim \pi} \! \big[ReLU  \! \big(h_i(x_{t}) \! -  \!h_i(\hat{x}_{t+1})  \! -  \! \eta h_i(x_t)\big)\big]  \! =  \! 0\vspace{1ex} \\\vspace{1ex}
        &\qquad \qquad \qquad \qquad \qquad \qquad \qquad \qquad \qquad  \forall i \in [1,k]\\\vspace{1ex}
            \,\,& \mathbb{E}_{x_t\sim \mathcal{D}, u \sim \pi} \! \big[ReLU  \! \big(L_{\nu}(\hat{x}_{t+1})  \! -  \! L_{\nu}(x_t)  \! +  \! \beta L_{\nu}(x_t)\big)\big]  \! =  \! 0. 
        \end{split}
    \end{equation}
Here expected values are calculated to construct constraints since we sample a batch of $x_t$ from the replay buffer. The objective function $V^{\pi_{\theta}}(x_t)$ is:
        \begin{equation}\label{RL-based controller objective function}
                \begin{split}
                        V^{\pi_{\theta}}=\mathbb{E}_{x_t\sim \mathcal{D},\xi \sim \mathcal{N}}&\big[\mathop {\min }\limits_{j=1,2}Q_{\phi_j}(x_t,\tilde{u}_{\theta}(x_t,\xi ))\\
                        &-\alpha \log \pi_{\theta}(\tilde{u}_{\theta}(x_t,\xi )|x_t)  \big]
                \end{split}
            \end{equation}
which is the same as the commonly-used objective function in SAC \cite{haarnoja2018soft2} with only small differences in notation. $\tilde{u}_{\theta}(x_t,\xi )=\tanh (\mu_{\theta}(x_t)+\sigma_{\theta}(x_t)\odot \xi),\xi \sim \mathcal{N}(0,I)$, where $\mu_{\theta}$ and $\sigma_{\theta}$ denote the mean and standard deviation of the controller $\pi$, which is a Gaussian distribution, and $\odot$ represents element-wise multiplication. Additionally, loss functions of the action-value networks $Q_{\phi_{i}}, i=1,2$, coefficient $\alpha$, and Lyapunov network $L_{\nu}$ are:
        \begin{align}
                \!\!\!\!&\!J_{Q}(Q_{\phi_i})\!\!=\!\mathbb{E}_{(x_t, u_t, r_t, x_{t+1})\sim \mathcal{D},\xi \sim \mathcal{N}}\!\Big[\!\big[r_t \!\!+\! \gamma \big(\!\mathop {\min }\limits_{j=1,2}\!Q_{targ,\phi_j}\!(x_{t+1}, \nonumber\\
                \!\!\!\!&\tilde{u}_{\theta}(x_{t+1},\xi ))\!-\!\alpha \log \pi_{\theta}(\tilde{u}_{\theta}(x_{t+1},\xi )|x_{t+1})\big)\!-\!Q_{\phi_{i}}(x_t,u_t)\big]^2\!\Big], \label{RL-based controller value function loss}\\
                \!\!\!\!&\!J_{\alpha}(\alpha)\!\!=\! - \alpha \times \mathbb{E}_{x_t\sim \mathcal{D},\xi \sim \mathcal{N}}\!\big[\log \pi_{\theta}(\tilde{u}_{\theta}(x_t,\xi )|x_t) \!+\!\mathcal{H}\big], \label{RL-based controller alpha function loss}\\
                \!\!\!\!&\!J_{L}(L_{\nu})\!\!=\!\mathbb{E}_{(x_t, c_t, x_{t+1})\sim \mathcal{D}}\!\Big[\!\big[c_t \!+\! \gamma_c L_{targ,\nu}(x_{t+1}) 
                \!-\!L_{\nu}(x_t)\big]^2\!\Big],\label{Lyapunov  function loss}
        \end{align}
where $Q_{targ,\phi_i}, i=1,2$ are the target action-value networks, and $\mathcal{H}$ is a designed threshold set to be the lower bound of the entropy of the controller $\pi_{\theta}$. The constrained optimization problem (\ref{RL-based constrained optimization}) can be solved by the augmented Lagrangian method, and the augmented Lagrangian function is:
        \begin{equation}\label{augmented Lagrangian function}
                \begin{split}
                    &\mathcal{L}_A (\theta,\lambda_{i},\zeta ;\rho_{\lambda_i},\rho_{\zeta }) = -V^{\pi_{\theta}}\\
                    &+\!\sum_{i = 1}^{k}\! \lambda_{i}\! \times\! \mathbb{E}_{x_t\sim \mathcal{D}, u \sim \pi}\big[ReLU\!\big(h_i(x_{t})\! -\! h_i(\hat{x}_{t+1}) \!-\!\eta h_i(x_t)\big)\big]\\
                    &+ \!\sum_{i = 1}^{k}\!\frac{\rho_{\lambda_i}}{2}\! \big[\mathbb{E}_{x_t\sim \mathcal{D}, u \sim \pi}\!\big[\!ReLU\!\big(h_i(x_{t})\!\! - \!\!h_i(\hat{x}_{t+1})\! -\!\eta h_i(x_t)\big)\big]\big]\!^2\\
                    &+\!\zeta\! \!\times \!\mathbb{E}_{x_t\sim \mathcal{D}, u \sim \pi}\big[ReLU\!\big(L_{\nu}(\hat{x}_{t+1})\! -\! L_{\nu}(x_t) \!+\!\beta L_{\nu}(x_t)\big)\big] \\
                    &+\! \frac{\rho_{\zeta }}{2}\! \big[\mathbb{E}_{x_t\sim \mathcal{D}, u \sim \pi}\big[ReLU\!\big(L_{\nu}(\hat{x}_{t+1}) \!-\! L_{\nu}(x_t) \!+\!\beta L_{\nu}(x_t)\big)\big]\big]\!^2,
                \end{split}
            \end{equation}
where $\lambda_i$ and $\zeta $ are the Lagrangian multipliers for CBF and CLF constraints, respectively, and $\rho_{\lambda_i}$ and $\rho_{\zeta}$ are the corresponding coefficients for the additional quadratic terms. Since it is usually not easy to solve the problem
        \begin{align}\label{argmin update for theta}
                \theta_{k+1} = \mathop{\arg\min}\limits_{\theta} \mathcal{L}_A (\theta,\lambda_{i,k},\zeta_k ;\rho_{\lambda_{i,k}},\rho_{\zeta_{k}}) 
            \end{align}
            directly in RL, we still apply gradient descent to update $\theta$, and gradient ascent to update $\lambda_{i}$ and $\zeta$, while increasing the value of $\rho_{\lambda_i}$ and $\rho_{\zeta }$ gradually to find a solution for the constrained optimization problem (\ref{RL-based constrained optimization}). The pseudocode is
provided as a part of Algorithm \ref{alg:1}.

\subsection{Backup Controller Design}
Due to the existence of several constraints, the feasibility of the constrained optimization problem (\ref{RL-based constrained optimization}) becomes a crucial problem during the learning process. Typically, there are two scenarios where infeasibility can cause problems: 
\begin{itemize}
        \item The CLF constraint for stability is violated, for example, the agent is required to cross an obstacle to reach the equilibrium by the stability constraint, but this is prevented by the safety constraint, and thus the agent is trapped in some specific positions near the obstacle.
        \item The CBF constraint for safety is violated, for example, the desired state is not fixed but a reference signal and falls in a danger region, which means satisfying the CLF constraint will breach the safety constraint eventually.
\end{itemize} 
The frequent invalid control signals provided by the RL-based controller due to the infeasibility may prevent the system from approaching its desired state quickly, or violate the safety constraints severely. Given that safety takes priority when safety and stability constraints cannot be satisfied simultaneously,  similar to \cite{cheng2019end,emam2021safe}, we propose to design a backup controller by formulating an additional constrained optimization problem that leverages CBFs as constraints to achieve and maintain safety. However, compared to previous studies, the objective function of this backup controller is designed not only to minimize the difference between the actual control signal and nominal control signal, but also to prevent the system from deviating too much from its equilibrium by incorporating the CLF constraint. We first establish a QP-based controller as follows:
\begin{equation}\label{formulation of QP-based backup controller}
        \begin{split}
            \min_{u_{\text{modi}},\epsilon } \,  &\frac{1}{2}\,u_{\text{modi}}^T\,Q\, u_{\text{modi}} + k_{\epsilon,i }\epsilon_i ^2 \!-\!  \kappa \nabla_{x}  L_{\nu}(x_t) 
\! \cdot \! g(x_t)u_{\text{modi}}\\
            s.t. 
            & \,\,h_i\big(f(x_t)+g(x_t)(u_{\text{nominal}} - u_{\text{modi}})+d(x_t)\big)\\
            & \,\,\,\,\,\,\,\,\,\,- h_i(x_t)\geq - \eta h_i(x_t)-\epsilon_i \,\,\,\,\, \forall i \in [1,k],
        \end{split}
    \end{equation}
where $x_t$ is the current state of the system, $Q$ is a symmetric positive semidefinite matrix, and similar to \cite{cheng2019end,emam2021safe}, $d(x_t)$ is estimated and replaced by the mean and variance given by GP. $\epsilon_i$ is the slack variable introduced to enforce the feasibility of the constrained optimization problem with the coefficient $k_{\epsilon,i }$. $u_{\text{modi}} = u_{\text{nominal}} - u_{\text{actual}}$, where $u_{\text{modi}}$ is the optimization variable, $u_{\text{nominal}}$ is the nominal control signal, and $u_{\text{actual}}$ is the actual control signal to perform. Additionally, $\nabla_{x}   L_{\nu}(x)$ is the gradient of the Lyapunov network with respect to the state, and therefore the CLF constraint is incorporated into the objective function with the coefficient $\kappa$. 

The choices of coefficients, nominal control signal, and the condition for using the backup controller depend on the systems to which the backup controller is applied. This QP-based controller is tested and shown to work well in tasks where the CBFs are affine with respect to the control signal. When the CBFs are not affine, it is possible to apply local first-order linearization to the CBFs to obtain approximate affine surrogate CBFs, or other types of constrained optimization problems can be utilized to construct the backup controller, for example, using a quadratically constrained quadratic program (QCQP) when the CBF takes a quadratic form with respect to the control signal. Two detailed examples corresponding to the two typical scenarios listed earlier in this subsection are included in Section \ref{sec:experiment}. 

In summary, the framework combining the RL-based and backup controllers can be summarized as Algorithm \ref{alg:1}.
\begin{algorithm}
    
        \caption{Barrier-Lyapunov Actor-Critic (BLAC)} \label{alg:1}
        \begin{algorithmic}[1]
            \STATE Initialization: RL-based controller network $\pi_{\theta} $, coefficient $\alpha$, action-value networks $Q_{\phi_i}, i=1,2 $, Lyapunov network $L_{\nu }$, Lagrange multipliers $\lambda_{i}$ and $\zeta $, replay buffer $\mathcal{B}$,  coefficients of quadratic terms $\rho_{\lambda_i}$ and $\rho_{\zeta }$, learning rates $\eta_{1}$, $\eta_{2}$, and $\eta_{3}$, quadratic term
coefficient factor  $C_{\rho} \in (1,\infty )$
            \FOR{$k = 1,\ldots , K $}
            \IF{Backup controller should be used according to the condition specific to the task}
            \STATE Solve the constrained optimization problem (\ref{formulation of QP-based backup controller})
            \STATE Apply the control signal $u_{\text{actual}}$
            \ELSE 
            \STATE Sample and apply control signal $u_t$
            \STATE Store the transition pair $(x_t,u_t,r_t,c_t,x_{t+1})$ in $\mathcal{B}$
            \STATE Sample a batch of transition pairs randomly from $\mathcal{B}$, and construct CBF and CLF constraints with the system model
            \STATE Update the Lyapunov network and action-value networks by using (\ref{Lyapunov  function loss}) and (\ref{RL-based controller value function loss}) according to
            \vspace{-0.5em}
            \begin{align*}
                 &\nu_{k+1}\leftarrow \nu_{k} - \eta_{1}\nabla_{\nu}J_{L}(L_{\nu_{k}})\\
                & \phi_{i_{k+1}}\leftarrow \phi_{i_{k}} - \eta_{1}\nabla_{\phi_{i}}J_{Q}(Q_{\phi_{i_{k}}})
        \end{align*}
        \vspace{-1.5em}
            \STATE Update the controller network and coefficient $\alpha$ by using (\ref{augmented Lagrangian function}) and (\ref{RL-based controller alpha function loss}) according to
            \vspace{-0.5em}
            \begin{align*}
                 &\theta_{k+1}\leftarrow \theta_{k} - \eta_{2}\nabla_{\theta}\mathcal{L}_A (\theta_k,\lambda_{i,k},\zeta_k ;\rho_{\lambda_{i,k}},\rho_{\zeta_{k} })\\
                & \alpha_{k+1}\leftarrow \alpha_{k} - \eta_{2}\nabla_{\alpha}J_{\alpha}(\alpha_k)
        \end{align*}
        \vspace{-1.5em}
            \STATE  Update the Lagrangian multipliers using (\ref{augmented Lagrangian function}) according to
            \vspace{-0.5em}
            \begin{align*}
                 &\lambda_{i,k+1}\leftarrow \lambda_{i,k} + \eta_{3}\nabla_{\lambda_{i}}\mathcal{L}_A (\theta_{k+1},\lambda_{i,k},\zeta_k ;\rho_{\lambda_{i,k}},\rho_{\zeta_{k} })\\
                & \zeta_{k+1}\leftarrow \zeta_{k} + \eta_{3} \nabla_{\zeta}\mathcal{L}_A (\theta_{k+1},\lambda_{i,k},\zeta_k ;\rho_{\lambda_{i,k}},\rho_{\zeta_{k} })
        \end{align*}
        \vspace{-1.5em}
            \STATE Update coefficients of quadratic terms by $\rho_{\lambda_{i,k+1}}\leftarrow C_{\rho}\rho_{\lambda_{i,k}}$, $\rho_{\zeta_{k+1}}\leftarrow C_{\rho}\rho_{\zeta_{k}}$, and GP model
            \ENDIF
            \ENDFOR 
            \RETURN $\pi_{\theta}$, $Q_{\phi_i}, i=1,2$, and $L_{\nu}$.  
        \end{algorithmic} 
    \end{algorithm}

\section{Simulations}
\label{sec:experiment}
In this section, we test the framework on two tasks to answer the following questions:
\begin{itemize}
        \item Does the BLAC framework assist in guaranteeing the stability for the system when compared to other baseline algorithms? Since in these tasks high rewards will be given when the system approaches and achieves the desired state (equilibrium), we use the cumulative reward as a measure, where a higher cumulative reward indicates that the system can converge to the equilibrium better.
        \item Does the BLAC framework cause fewer violations of safety constraints compared to baseline algorithms?
\end{itemize}

We use LAC \cite{han2020actor}, CPO \cite{achiam2017constrained}, PPO-Lagrangian and TRPO-Lagrangian \cite{Ray2019} as baselines. These baselines use the Lyapunov-based method, trust region policy optimization, and primal-dual method, respectively, and are representative works in the literature of constrained reinforcement learning. Furthermore, to demonstrate whether the CLF constraint contributes to maintaining stability and therefore improves performance, we remove the CLF constraint in the BLAC framework to create an additional algorithm Barrier Actor-Critic (BAC) as another baseline. Two custom environments built based on \cite{cheng2019end}\cite{emam2021safe} are used since model knowledge is required for the proposed method.

\subsection{Unicycle}
This experiment is modified from the first environment in \cite{emam2021safe}. In this experiment, a unicycle is required to arrive at the desired location, i.e., destination, while avoiding collisions with obstacles. The model of the unicycle is given as:
\begin{equation*}
        x_{t+1}=x_t+
        \left[ \begin{array}{cc}
        \Delta T\cos (\theta_t)  & 0 \\
        \Delta T\sin (\theta_t) & 0 \\
        0 & \Delta T 
        \end{array}
        \right ] (u_t+u_{d,t}).
        \end{equation*}
        \begin{figure}
                \centering
                \includegraphics[scale=0.35]{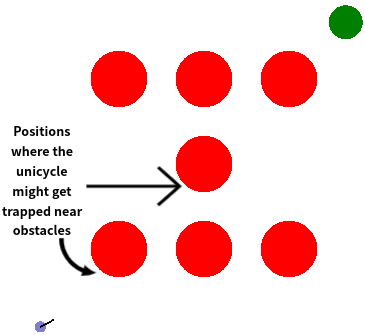}
                \caption{The unicycle environment. The blue point is the unicycle system, the green circle is the destination (desired state), and the red circles are obstacles to avoid. Some positions where the unicycle can get trapped due to the infeasibility of the constrained optimization problem with multiple CBF and CLF constraints are shown.}
                \label{fig:The unicycle environment}
\vspace{-0.5cm}        
\end{figure}

Here $x_t = [x_{1t},x_{2t}, \theta_{t}]^T$ where $x_{1t}$ and $x_{2t}$ are the X-coordinate and Y-coordinate of the unicycle at the timestep $t$, $\theta$ is the angle between the X-coordinate and the direction of the unicycle's movement at $t$. $u_t = [v_t,\omega_t ]^T$ is the control signal where $v_t$ and $\omega_t$ are the linear and angular velocities, respectively. $\Delta T$ represents the time interval. $u_{d,t}=-0.1[\cos(\theta_t),0]^T$ is unknown to the nominal model, and therefore is the unknown part and GPs can be used. Then, similarly to \cite{emam2021safe}, to formulate collision-free safety constraints, a point at a distance $l_p \geq 0$ ahead of the unicycle is considered, and we define the function $p:\mathbb{R}^3 \rightarrow \mathbb{R}^2$ to be
\begin{equation*}
        p(x_t)=
        \left[ \begin{array}{c}
                x_{1t} \\
                x_{2t} \\
        \end{array} 
        \right ]+l_p
        \left[ \begin{array}{c}
                \cos (\theta_t) \\
                \sin (\theta_t) \\
        \end{array}
        \right ].
        \end{equation*}
        
The reward signal is defined as $-K_1(v_t-v_{s})^2 + K_2\Delta d$, where $v_{s}$ is the predefined velocity, $\Delta d$ is the decrease in the distance between the unicycle and destination in two consecutive timesteps, and $K_1$ and $K_2$ are coefficients set to 0.1 and 30, respectively. The cost signal is $\left\lVert p(x_{t+1})-p(x_{\text{desired}})\right\rVert $ where $p(x_{\text{desired}})=[x_{\text{1desired}},x_{\text{2desired}}]^T$ denotes the position of the desired location. CBFs are defined as $h_i(x_t) =\frac{1}{2}\big((p(x_t)-p_{\text{obs}_i})^2 - \delta ^2\big)$ where $p_{\text{obs}_i}$ is the position of the $i$-th obstacle, and $\delta$ denotes the minimum required distance between the unicycle and obstacles. When stability constraint is violated if safety and stability constraints cannot be satisfied simultaneously, the unicycle can get trapped near obstacles, and then the RL-based controller is replaced by the backup controller where $u_{\text{nominal}}$ is set to be the maximum allowable control signal to encourage exploration while maintaining safety. The RL-based controller will resume when the unicycle moves away from the trapped position for a long distance, or when the predetermined time threshold for using the backup controller is exceeded. 

\begin{figure*}[ht]
        \centering
        \includegraphics[scale=0.403]{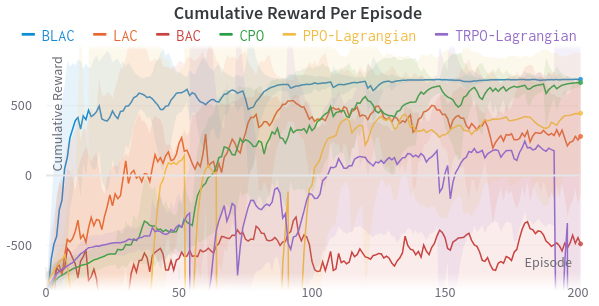}
        \hspace{5mm}
        \includegraphics[scale=0.403]{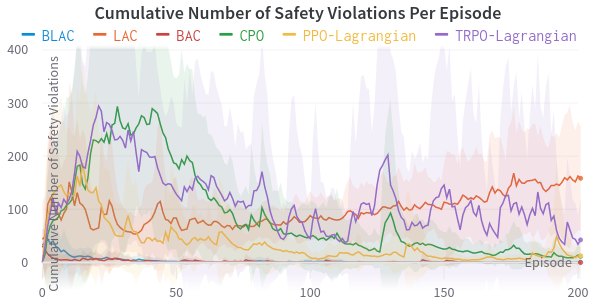}
        \caption{The cumulative reward and cumulative number of safety violations of each episode
in the unicycle environment are compared for the proposed BLAC (in blue) and other baselines. Each plot shows the mean of ten experiments using different seeds, with the shading representing the standard deviation.}
\label{fig:Unicycle results}
\vspace{-0.4cm} 
      \end{figure*}
Simulation results are shown in Figure \ref{fig:Unicycle results}. Compared to other baselines, our framework enables the system to achieve higher cumulative reward in fewer episodes, indicating that the framework helps the unicycle approach and successfully reach its destination (equilibrium) after a shorter training process. Additionally, the fluctuations of the cumulative reward are smaller than those of any other baseline algorithm, suggesting that the system performance in obtaining high rewards can quickly recover to its original high level after deterioration. Regarding safety, as evidenced by the cumulative number of safety violations per episode, our framework results in much fewer violations compared to LAC, CPO, PPO-Lagrangian, and TRPO-Lagrangian where CBFs are not used, which means CBFs can greatly help maintain safety, and our framework is thus suitable for safety-critical applications in the real world. 

\subsection{Simulated Car Following}
This environment, which involves a chain of five cars following each other on a straight road, is adapted from \cite{cheng2019end}\cite{emam2021safe}. The goal is to control the velocity of the $4^{th}$ car to keep a desired distance from the $3^{rd}$ car while avoiding collisions with other cars. The model of cars except for the $4^{th}$ one is:
        \begin{equation*}
                \begin{array}{l}
                        x_{t+1,i}=x_{t,i}
        +
        \left[ \begin{array}{c}
        v_{t,i} \\
        0
        \end{array}
        \right ] \Delta T +
        \left[ \begin{array}{c}
            0  \\
            1+d_i 
            \end{array}
            \right ] a_{t,i} \Delta T \\
        \qquad \qquad \qquad \qquad \qquad \qquad \qquad \qquad  \forall i \in \{1,2,3,5\}.
                \end{array}
                \end{equation*}
Each state of the system is denoted by $x_{t,i} = [p_{t,i},v_{t,i}]^T$, where $p_{t,i}$ and $v_{t,i}$ are the position and velocity of the $i^{th}$ car at the timestep $t$, $d_i=0.1$ is unknown to the nominal model and therefore is the unknown part. $\Delta T$ represents the time interval. The $1^{st}$ car has a velocity $v_{t,1} = v_{s} - 4\sin (t)$, where $v_{s}=3.0$ is the predefined velocity. Its acceleration is given by $a_{t,1}=k_v(v_{s}-v_{t,1})$ where $k_v=4.0$ is a constant. Car 2 and 3 have accelerations given by:
    \begin{equation*}
    a_{t,i}\!=\!\left\{
\begin{aligned}  
&\!k_v(v_{s}\!-\!v_{t,i})\!-\!k_b(p_{t,i\!-\!1}\!-\!p_{t,i})\,\,if\,|p_{t,i\!-\!1}\!-\!p_{t,i}|\! <\! 6.5\\
&\!k_v(v_{s}\!-\!v_{t,i})\,\,\,\,\,\,\,\,\,\,\,\,\,\,\,\,\,\,\,\,\,\,\,\,\,\,\,\,\,\,\,\,\,\,\,\,\,\,\,\,\,\,\,\,\,otherwise, \\
\end{aligned}
\right.
\end{equation*}    
where $k_b=20.0$. The $5^{th}$ car has the following acceleration:
\begin{equation*}
    a_{t,5}\!=\!\left\{
\begin{aligned}
&\!k_v(v_{s}\!-\!v_{t,5})\!-\!k_b(p_{t,3}\!-\!p_{t,5})\,\,if\,|p_{t,3}-p_{t,5}| \!<\! 13.0\\
&\!k_v(v_{s}\!-\!v_{t,5})\,\,\,\,\,\,\,\,\,\,\,\,\,\,\,\,\,\,\,\,\,\,\,\,\,\,\,\,\,\,\,\,\,\,\,\,\,\,\,\,\,\,\,\,\,otherwise. \\
\end{aligned}
\right.
\end{equation*}
The model of the $4^{th}$ car is as follows: 

        \begin{equation*}
         x_{t+1,4}=\left[\begin{array}{cc}
                 1  & 0 \\
                 0 & 0 
        \end{array}\right ] x_{t,4}
         +
         \left[ \begin{array}{c}
         1 \\
         \frac{1}{\Delta T}
         \end{array}
         \right ] u_{t} \Delta T,
         \end{equation*} 
where $u_{t}$ is the velocity of the $4^{th}$ car, and also the control signal generated by the controller.

The reward signal is defined to minimize the difference between $u_t$ and $v_{s}$, and an additional reward of 1.5 is given at the timesteps when $d_t = p_{t,3}-p_{t,4}$, which is the distance between the $3^{rd}$ and $4^{th}$ car, is within $[9.0,10.0]$ defined as the region where $d_t$ is expected to be. 
The cost signal is $\left\lVert d_{t+1} - d_{\text{desired}}\right\rVert $, where $d_{\text{desired}}=9.5$. 
CBFs are defined as $h_1(x_t) =p_{t,3}-p_{t,4} - \delta$ and $h_2(x_t) =p_{t,4}-p_{t,5} - \delta$, where $\delta$ is the minimum required distance between the cars. When a safety constraint is violated if two types of constraints cannot be simultaneously satisfied, the $4^{th}$ car may be too close to the $5^{th}$ car to make $d_t$ be within $[9.0,10.0]$. In this case, the backup controller where $u_{\text{nominal}}$ is simply set to be zero to check whether safety can be achieved as quickly as possible and then maintained is activated. The RL-based controller will be reinstated once the $4^{th}$ car leaves the dangerous area, namely beyond the proximity of the $5^{th}$ car.  
\begin{figure*}[ht]
        \centering
        \includegraphics[scale=0.403]{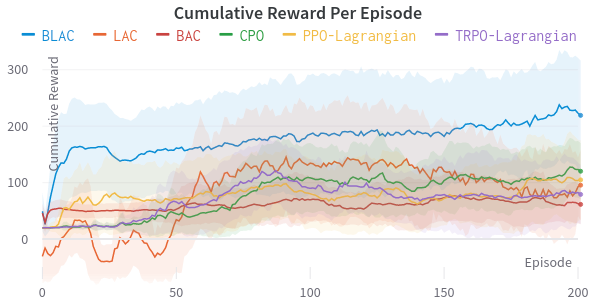}
        \hspace{5mm}
        \includegraphics[scale=0.403]{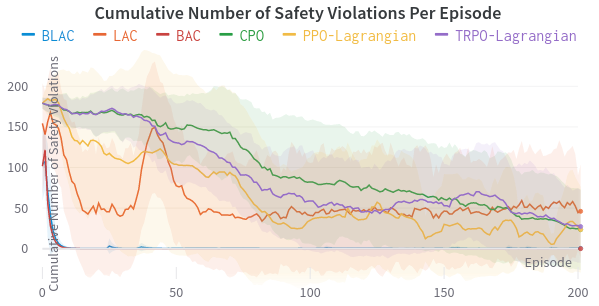}
        \caption{Comparisons are made between the cumulative reward and cumulative number of safety violations of each episode for the proposed BLAC (in blue) and other baselines in the simulated car following environment. Each plot shows the mean of ten experiments using different seeds, with the shading representing the standard deviation. Graph comparing the cumulative number of safety violations shows that the curves for BLAC and BAC decrease rapidly to zero and do not exhibit significant fluctuations afterwards.}
        \label{fig:Simulated Car Following results}
\vspace{-0.7cm} 
        
      \end{figure*}

Based on the results presented in Figure \ref{fig:Simulated Car Following results}, the proposed framework consistently yields the highest cumulative reward, indicating its superior performance in regulating the distance $d_t$ within the range $[9.0,10.0]$. Moreover, the much smaller cumulative number of safety violations compared to baselines demonstrates the effectiveness of the proposed framework in helping the system achieve safety with the assistance of CBFs. These results suggest that the proposed framework has promising potential for practical applications.

\section{CONCLUSIONS}
\label{sec:conclusion}
In this paper, we propose the BLAC framework, which combines separate CBF and CLF constraints with the actor-critic RL method to help to guarantee both the safety and stability of the controlled system. This framework imposes safety constraints for each step in the trajectory instead of the trajectory expectation which is widely used in previous research. Thus, this framework imposes stricter safety constraints, which is crucial in real-world safety-critical applications. Moreover, our framework contributes to guaranteeing the stability of the system, facilitating the system to approach the desired state (equilibrium) and obtain higher cumulative reward in tasks where high rewards are offered when the system gets closer to or reaches the desired state, such as navigation tasks. With the augmented Lagrangian method and backup controller, higher cumulative reward and fewer safety constraint violations are realized in the experiments.

However, there are also some limitations of this framework: 1. the CBFs are predefined before the learning process, but in real-world applications, it may be nontrivial to construct valid CBFs; 2. the framework requires knowledge of the control affine model of the system. Also,  performance comparisons can be conducted between this RL-based control policy and other model-based optimal control policies \cite{moyalan2021sum}; 3. the framework is only tested on tasks where the relative-degree of the CBFs is 1, and therefore, more research where CBFs with high relative-degree are used should be conducted in the future.  We believe that addressing these current limitations could be interesting future directions.

%\addtolength{\textheight}{-12cm}   % This command serves to balance the column lengths
                                  % on the last page of the document manually. It shortens
                                  % the textheight of the last page by a suitable amount.
                                  % This command does not take effect until the next page
                                  % so it should come on the page before the last. Make
                                  % sure that you do not shorten the textheight too much.

%%%%%%%%%%%%%%%%%%%%%%%%%%%%%%%%%%%%%%%%%%%%%%%%%%%%%%%%%%%%%%%%%%%%%%%%%%%%%%%%

%%%%%%%%%%%%%%%%%%%%%%%%%%%%%%%%%%%%%%%%%%%%%%%%%%%%%%%%%%%%%%%%%%%%%%%%%%%%%%%%

%%%%%%%%%%%%%%%%%%%%%%%%%%%%%%%%%%%%%%%%%%%%%%%%%%%%%%%%%%%%%%%%%%%%%%%%%%%%%%%%

%%%%%%%%%%%%%%%%%%%%%%%%%%%%%%%%%%%%%%%%%%%%%%%%%%%%%%%%%%%%%%%%%%%%%%%%%%%%%%%%

\bibliography{bibli_new}
\bibliographystyle{ieeetr}

\end{document}